\def\squareforqed{\hbox{\rlap{$\sqcap$}$\sqcup$}}
\def\qed{\ifmmode\squareforqed\else{\unskip\nobreak\hfil
\penalty50\hskip1em\null\nobreak\hfil\squareforqed
\parfillskip=0pt\finalhyphendemerits=0\endgraf}\fi}
\def\endenv{\ifmmode\;\else{\unskip\nobreak\hfil
\penalty50\hskip1em\null\nobreak\hfil\;
\parfillskip=0pt\finalhyphendemerits=0\endgraf}\fi}
\newtheorem{theorem}{Theorem}
\newtheorem{lemma}[theorem]{Lemma}
\newtheorem{proposition}[theorem]{Proposition}
\newtheorem{remark}[theorem]{Remark}
\newlength{\blank}
\newenvironment{proof}[1][{\hspace{-\blank}}]{{\noindent\textbf{Proof~{#1}.\ }}}{\hfill\qed}
\newcommand{\ket}[1]{|#1\rangle}
\newcommand{\bra}[1]{\langle#1|}
\mathchardef\ordinarycolon\mathcode`\:
\def\vcentcolon{\mathrel{\mathop\ordinarycolon}}
\newcommand{\nc}{\newcommand}
\nc{\rnc}{\renewcommand}
\nc{\beq}{\begin{equation}}
\nc{\eeq}{{\end{equation}}}
\nc{\beqa}{\begin{eqnarray}}
\nc{\eeqa}{\end{eqnarray}}
\nc{\lbar}[1]{\overline{#1}}
\nc{\ketbra}[2]{|#1\rangle\!\langle#2|}
\nc{\proj}[1]{| #1\rangle\!\langle #1 |}
\nc{\avg}[1]{\langle#1\rangle}
\rnc{\max}{\operatorname{max}}
\nc{\Rank}{\operatorname{Rank}}
\nc{\smfrac}[2]{\mbox{$\frac{#1}{#2}$}}
\nc{\tr}{\operatorname{Tr}}
\nc{\ox}{\otimes}
\nc{\dg}{\dagger}
\nc{\dn}{\downarrow}
\nc{\cA}{\mathcal{A}}
\nc{\cB}{\mathcal{B}}
\nc{\cC}{\mathcal{C}}
\nc{\cD}{\mathcal{D}}
\nc{\cE}{\mathcal{E}}
\nc{\cF}{\mathcal{F}}
\nc{\cG}{\mathcal{G}}
\nc{\cH}{\mathcal{H}}
\nc{\cI}{\mathcal{I}}
\nc{\cJ}{\mathcal{J}}
\nc{\cK}{\mathcal{K}}
\nc{\cL}{\mathcal{L}}
\nc{\cM}{\mathcal{M}}
\nc{\cN}{\mathcal{N}}
\nc{\cO}{\mathcal{O}}
\nc{\cP}{\mathcal{P}}
\nc{\cR}{\mathcal{R}}
\nc{\cS}{\mathcal{S}}
\nc{\cT}{\mathcal{T}}
\nc{\cX}{\mathcal{X}}
\nc{\cZ}{\mathcal{Z}}
\nc{\csupp}{{\operatorname{csupp}}}
\nc{\qsupp}{{\operatorname{qsupp}}}
\nc{\var}{\operatorname{var}}
\nc{\rar}{\rightarrow}
\nc{\lrar}{\longrightarrow}
\nc{\polylog}{\operatorname{polylog}}
\nc{\id}{{\operatorname{id}}}
\nc{\RR}{{{\mathbb R}}}
\nc{\CC}{{{\mathbb C}}}
\nc{\FF}{{{\mathbb F}}}
\nc{\NN}{{{\mathbb N}}}
\nc{\ZZ}{{{\mathbb Z}}}
\nc{\PP}{{{\mathbb P}}}
\nc{\QQ}{{{\mathbb Q}}}
\nc{\UU}{{{\mathbb U}}}
\nc{\EE}{{{\mathbb E}}}
\nc{\be}{\begin{equation}}
\nc{\ee}{{\end{equation}}}
\nc{\bea}{\begin{eqnarray}}
\nc{\eea}{\end{eqnarray}}
\nc{\Hom}[2]{\mbox{Hom}(\CC^{#1},\CC^{#2})}
\nc{\rU}{\mbox{U}}
\begin{document}

\title{The private capacity of quantum channels is not additive}

\author{Ke Li}
\email{leeke@mail.ustc.edu.cn}
\affiliation{Key Laboratory of Quantum Information, University of
Science and Technology of China, Chinese Academy of Sciences, Hefei,
Anhui 230026, China}

\author{Andreas Winter}
\email{a.j.winter@bris.ac.uk} \affiliation{Department of
Mathematics, University of Bristol, University Walk, Bristol BS8
1TW, U.K.} \affiliation{Centre for Quantum Technologies, National
University of Singapore, 2 Science Drive 3, Singapore 117542}

\author{XuBo Zou}
\email{xbz@ustc.edu.cn} \affiliation{Key Laboratory of Quantum
Information, University of Science and Technology of China, Chinese
Academy of Sciences, Hefei, Anhui 230026, China}

\author{GuangCan Guo}
\email{gcg@ustc.edu.cn}
\affiliation{Key Laboratory of Quantum Information, University of Science and Technology of China, Chinese Academy of Sciences, Hefei, Anhui 230026, China}

\date{30 August 2009}

\begin{abstract}
Recently there has been considerable activity on the subject
of additivity of various quantum channel capacities.
Here, we construct a family of channels with sharply bounded
classical, hence private capacity.
On the other hand, their quantum capacity when
combined with a zero private (and zero quantum) capacity erasure
channel, becomes larger than the previous classical capacity.
As a consequence, we can conclude for the first time
that the classical private capacity is
non-additive. In fact, in our construction even the quantum capacity of
the tensor product of two channels can be greater than the sum
of their individual classical private capacities.
We show that this violation occurs quite generically: every channel
can be embedded into our construction, and a violation occurs whenever
the given channel has larger entanglement assisted quantum capacity
than (unassisted) classical capacity.
\end{abstract}

\maketitle

Information Theory, established by Claude Shannon in the
1940s as a ``Mathematical Theory of
Communication''~\cite{Shannon48}, is the theoretical foundation of
today's communication technologies. The main problem it is
concerned with is how much information can be transmitted
down a noisy channel asymptotically, i.e.~the capacity of the
channel. Shannon provided a beautifully simple formula for the
capacity of a discrete memoryless channel, which only involves an
entropic expression of a single channel use. Subsequent research
revealed that this simple capacity formula fully characterizes the
information-carrying capability of a channel under a large range of
circumstances~\cite{CT91}, serving as a very robust measure. E.g.,
the ability of two channels together to transmit
information is quantified by the sum of their individual capacities.

Our world however is not the classical one of Shannon's noisy
channels, but is at a basic level described by quantum theory. To
understand the ultimate limit the laws of physics impose on our
ability to communicate, the underlying quantum behavior
of the channels should be considered. A quantum channel $\cN$ is
mathematically described by an isometric map $V$ from the input Hilbert
space $A$ to the combined Hilbert space of the output $B$ and the so-called
environment system $E$. Then the channel and its natural complement
$\widetilde{\cN}$ act as
\[
  \cN(\rho) = \tr_E V\rho V^\dagger,\quad
  \widetilde{\cN}(\rho) = \tr_B V\rho V^\dagger.
\]
It can in general not only convey classical
messages, but also quantum data, i.e.~a Hilbert space of quantum
states. It can also carry classical private information,
inaccessible to the environment, enabling the classically
impossible, provably unconditionally secure key
distribution~\cite{BB84}. Naturally, deriving capacity formulae of a
quantum channel for transmitting various kinds of information is a
central task of quantum information theory.

The classical capacity, $C(\cN)$, is the maximal rate of classical
information that the quantum channel $\cN$ can asymptotically
transmit with vanishing errors.
%
%
In contrast to the classical capacity, the definition of classical
private capacity $P(\cN)$ further requires that the
classical information conveyed is secret from the environment.
%
Finally, the quantum capacity $Q(\cN)$ quantifies how large a
Hilbert space of states the channel $\cN$ can transmit
asymptotically and with the error approaching zero. Operationally,
quantum information transmission implies private classical
transmission, which in turn implies plain classical
communication. I.e.,
\begin{equation}
  \label{eq:Capacities-Relation}
  C(\cN) \geq P(\cN) \geq Q(\cN).
\end{equation}

Despite considerable progress, tractable formulae for the quantum,
private and unrestricted classical capacities are still out of
reach. The HSW theorem~\cite{HSW9798}, Devetak~\cite{Devetak03} and
the LSD theorem~\cite{Lloyd97, Shor02, Devetak03} give the
classical, private and quantum capacities, respectively, as the
regularisation of single-letter quantities:
\begin{align}
  \label{eq:C-capacity}
  \chi(\cN)    &\leq C(\cN) = \lim_{n \rightarrow \infty}\frac{1}{n}\chi(\cN^{\otimes{n}}), \\
  \label{eq:P-capacity}
  P^{(1)}(\cN) &\leq P(\cN) = \lim_{n \rightarrow \infty}\frac{1}{n}P^{(1)}(\cN^{\otimes{n}}), \\
  \label{eq:Q-capacity}
  Q^{(1)}(\cN) &\leq Q(\cN) = \lim_{n \rightarrow \infty}\frac{1}{n}Q^{(1)}(\cN^{\otimes{n}}).
\end{align}
All three single-letter quantities are obtained via finite optimizations of
entropic expressions: the \emph{Holevo capacity} $\chi(\cN)$ is the maximum over all
ensembles $\{p_i,\rho_i\}$ of states on $A$ of the \emph{Holevo information}
\begin{equation}
  \label{eq:H-information}
  \chi_{\{p_i,\rho_i\}}(\cN) = H\!\left(\cN\!\left(\sum_i p_i \rho_i\right)\!\right)\!
                                 -\sum_i p_i H\bigl(\cN(\rho_i)\bigr),
\end{equation}
where $H(\rho)= -\tr \rho \log \rho$ is the von Neumann entropy
($\log$ is always the binary logarithm).
Similarly,
$P^{(1)}(\cN)= \max_{\{p_i,\rho_i\}} \Bigl(\chi_{\{p_i,\rho_i\}}(\cN)
                                            -\chi_{\{p_i,\rho_i\}}(\widetilde{\cN})\Bigr)$,
and $Q^{(1)}(\cN)=\max_{\rho}I_c(\rho,\cN)$,
with the \emph{coherent information}~\cite{Schumacher:I-coh}
\begin{equation}
  \label{eq:Co-information}
  I_c(\rho,\cN)=H(\cN(\rho))-H(\widetilde{\cN}(\rho)).
\end{equation}

Neither $\chi(\cN)$, nor $Q^{(1)}(\cN)$, nor $P^{(1)}(\cN)$ are additive;
in fact, $\chi \neq C$~\cite{Hastings08}, $P^{(1)} \neq P$~\cite{SmithRenesSmolin-P1},
$Q^{(1)} \neq Q$~\cite{ShorSmolin-Q1}. However, for certain classes
of channels it is known that $C(\cN)=\chi(\cN)$~\cite{King02,King03,Shor01},
and for other classes $P(\cN)=P^{(1)}(\cN)$, $Q(\cN)=Q^{(1)}(\cN)$~\cite{DevetakShor05}.

As measures of a channel's information transmitting capability, the
above three capacity quantities might be expected to be robust,
i.e.~just like Shannon's capacity for classical channels, to be
applicable under a large range of settings. While this is no longer
true when various auxiliary resources (e.g.~free entanglement or
classical communications) are available~\cite{BDSS04}, another weird
feature of the quantum capacity was discovered recently. Smith and
Yard~\cite{SY08} show that, as a function of channels, $Q(\cN)$ is
not additive. Specifically, for the two channels $\cN_1$ and $\cN_2$
with $\cN_1$ satisfying $Q(\cN_1)=0$ and $P(\cN_1)>0$, and $\cN_2$
the (zero quantum and zero private capacity) $50\%$ erasure channel,
$Q(\cN_1 \otimes \cN_2)\geq \frac{1}{2} P(\cN_1)>0$. One
might attribute this superactivation of quantum capacity to the
ability to transmit privacy \cite{Oppenheim08}, recalling the close
relationship between $Q(\cN)$ and $P(\cN)$~\cite{DW03}.
But surprisingly again, Smith and Smolin~\cite{SS08} have found
two channels such that either they have large
joint quantum capacity but negligible individual private classical
capacities, or one of them exhibits a large non-additivity of $\chi$.

In this Letter, we present quantum channels $\cT^k_{\cN}$
for given channel $\cN$ with finite environment dimension
(this includes all channels with finite dimensional input and output), and
integer $k$; it inherits input and output from $\cN$, but has also
auxiliary registers.
We can show that $C(\cN) \leq C(\cT^k_{\cN}) \leq C(\cN) + \delta(k)$,
where $\delta(k)$ goes to zero as $k\rar\infty$. Regarding the
capability of the channel $\cT^k_{\cN}$, together with a 50\%{}
erasure channel $\cA$, for
quantum communication, we find that the quantum capacity of the
combined channel $\cT^k_{\cN}\otimes \cA$ is lower bounded by
$Q_E(\cN)$, the entanglement-assisted quantum capacity of
$\cN$~\cite{BSST02}. So, for channels $\cN$ such that
$Q_E(\cN)>C(\cN)$, $\cT^k_{\cN}$ -- when combined with the above
erasure channel -- can transmit more quantum information than its
classical capacity $C(\cT^k_{\cN})$. Referring to
Eq.~(\ref{eq:Capacities-Relation}), we conclusively prove that the
classical private capacity, in fact even the quantum capacity, of
two channels can be greater than the sum of their individual
classical private capacities. Our findings not only demonstrate that
the classical private capacity of a quantum channel is generally not
additive, but also yield another counterexample to the additivity of
quantum capacity, of which the underlying reasoning is
different from that of Smith and Yard's~\cite{SY08}.

\medskip\noindent
{\bf The channel construction.} In the Stinespring representation
$\cN(\rho) =  \tr_E V\rho V^\dagger$, the partial trace embodies all
the noise of the channel as loss of information; if Bob got $E$ as
well, there would be no noise at all as he can undo the isometry.
However, a well-known way of giving him $E$ anyway, is to completely
randomize it: denoting the discrete Weyl operators on $E$ by $W_j$
($j=1,\ldots,|E|^2$), if the channel internally picks $j$ uniformly
at random and applies $W_j$ to $E$, it creates a new channel with
output $\cN(\rho)^B \ox \frac{1}{|E|}\1^E$. The extra register is
always constant, so the new channel has the exact same information
properties as $\cN$. The idea of the following channel construction
is to add another ``gadget'' on top of this, which outputs some
randomness approximating the uniform distribution above -- see
Fig.~\ref{fig1}; so, intuitively, on its own it does not alter too
much the classical capacity of the channel, but if paired with the
right resources can increase the quantum capacity.
\begin{figure}[ht]
  \includegraphics[width=8.5cm]{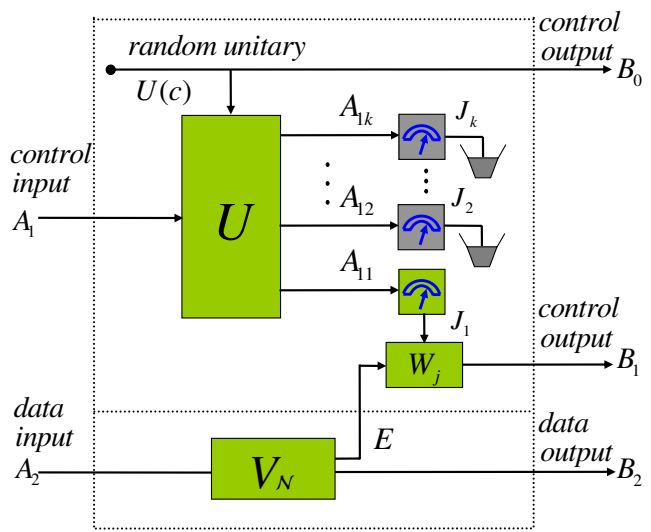}
  \caption{The channel $\cT^k_{\cN}$: in the lower part it contains
    $\cN$ (with input register $A_2$, output register $B_2$, and
    environment $E$).
    It also has another input register $A_1$ of dimension $c=|E|^{2k}$,
    which we view in a fixed way as a tensor product of $k$
    $|E|^2$-dimensional systems $A_{11}$, \ldots, $A_{1k}$, each coming
    with a fixed computational basis $\{ \ket{j} \}_{j=1,\ldots,|E|^2}$. This
    big register is subjected to a random unitary rotation $U$,
    where $U$ is chosen from the Haar measure and subsequently output
    (a classical description of it) in register $B_0$. All registers
    $A_{12}$, \ldots, $A_{1k}$ are discarded, only $A_{11}$ is measured
    in the computational basis, and the result $j$ used to control
    a unitary transformation (Weyl operator) $W_j$ on the environment $E$, which
    is then output in the register $B_1$. A formal definition can be found
    in the Auxiliary Material~\cite{AM}.
    }
  \label{fig1}
\end{figure}
%

A comment on why we need the rather large register
$A_1$, most of which is discarded anyway.
In fact, the size (parametrized by $k$) has a double purpose:
on the one hand, we need $A_{11}$ to be close to maximally mixed
for most inputs. But more importantly, to make it very ``costly'',
though not impossible,
to use entanglement with another system to access the index $J_1$
(see the proof of Theorem~\ref{thm:A1-almost-irrelevant}).

\medskip\noindent
{\bf The additivity violation.}
Now, if we knew that the Holevo quantity $\chi(\cT^k_{\cN})$ were
additive for $\cT^k_{\cN}$, we would have $C(\cT^k_{\cN}) =
\chi(\cT^k_{\cN})$. Since it is possible to show that
$\chi(\cT^k_{\cN}) \leq \chi(\cN) + o(1)$ -- this is a special case
of Theorem~\ref{thm:A1-almost-irrelevant} below --, we would have
an upper bound
\begin{equation}
  \label{eq:P-upper-bound}
  P(\cT^k_{\cN}) \leq C(\cT^k_{\cN}) \leq C(\cN) + o(1).
\end{equation}
While we are not able to show additivity for the channels
$\cT^k_{\cN}$, the above relation is nevertheless true. In fact, we
have the following general theorem, proved in full in the 
Auxiliary Material section~\cite{AM}.
\begin{theorem}
  \label{thm:A1-almost-irrelevant}
  For any channel $\cN$ with input space $A$, output space $B$ and
  envirnoment $E$, and any integer $k$,
  let $\delta(k) = \frac{1}{k}(5+4\log|E|)$.
  Then, for arbitrary channel $\cE$,
  \begin{equation}
    \label{eq:A1-almost-irrelevant}
    \chi(\cN\ox\cE) \leq \chi(\cT^k_{\cN} \ox \cE) \leq \chi(\cN\ox\cE) + \delta(k).
  \end{equation}
  As a consequence,
  \[
    C(\cN) \leq C(\cT^k_{\cN}) \leq C(\cN) + \delta(k).
  \]
\end{theorem}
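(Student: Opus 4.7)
The plan is to treat the two inequalities separately. The lower bound is a short ensemble construction: fix any ensemble $\{p_i,\rho_i^{A_2A'}\}$ for $\cN\ox\cE$ and lift it to $\{p_i,\tfrac{1}{|E|^{2k}}\1^{A_1}\!\ox\rho_i^{A_2A'}\}$ on $\cT^k_\cN\ox\cE$. Since the maximally mixed state on $A_1$ is invariant under the Haar-random $U$, measuring $A_{11}$ produces $J$ uniform on $\{1,\dots,|E|^2\}$ and independent of $i$; the Weyl average $\tfrac{1}{|E|^2}\sum_j W_j\widetilde{\cN}(\rho_i^{A_2})W_j^\dg=\1/|E|$ is therefore input-independent, and so is $B_0=U$. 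The channel output factors as a fixed state on $B_0B_1$ tensored with $(\cN\ox\cE)(\rho_i^{A_2A'})$ on $B_2B'$, so the Holevo quantity of the lifted ensemble equals $\chi_{\{p_i,\rho_i\}}(\cN\ox\cE)$, and maximizing yields $\chi(\cN\ox\cE)\le\chi(\cT^k_\cN\ox\cE)$.

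\textbf{Upper bound: reduction.} For the upper bound, take any ensemble $\{p_i,\rho_i^{A_1A_2A'}\}$ on $\cT^k_\cN\ox\cE$ and write its Holevo quantity as $I(M;B_0B_1B_2B')$ in the CQ state indexed by the classical register $M$. The chain rule gives
\[
I(M;B_0B_1B_2B')=I(M;B_0)+I(M;B_2B'\mid B_0)+I(M;B_1\mid B_0B_2B').
\]
Because $B_0=U$ is drawn from the Haar measure independently of the input, $I(M;B_0)=0$; and since the reduced state $B_2B'\mid M=i$ equals $(\cN\ox\cE)(\tr_{A_1}\rho_i)$ and is independent of $U$, we have $I(M;B_2B'\mid B_0)=I(M;B_2B')\le\chi(\cN\ox\cE)$ via the marginal ensemble $\{p_i,\tr_{A_1}\rho_i\}$. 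The theorem thus reduces to the single inequality
\[
I(M;B_1\mid B_0B_2B')\le\delta(k).
\]

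\textbf{Upper bound: leakage term.} The key intuition is that $B_1$ would be exactly $\1/|E|$ and carry no information whenever the measurement outcome $J$ is uniform and independent of $M$. I would enlarge the state by a classical register carrying $J$ and split
\[
I(M;B_1\mid B_0B_2B')\le I(M;J\mid B_0B_2B')+I(M;B_1\mid B_0B_2B'J).
\]
For the classical piece, second-moment Haar integrals show that $p(j\mid i,U)=\tr\bigl[U^\dg(\proj{j}_{A_{11}}\!\ox\1)U\,\rho_i^{A_1}\bigr]$ has mean $1/|E|^2$ with variance of order $|E|^{-2k-2}$, so $J$ is very close to uniform on average over $U$ and this contribution is small. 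For the quantum piece, the residual information about $M$ sits in $W_J\widetilde{\cN}(\rho_i^{A_2})W_J^\dg$ conditioned on $J$ and $(B_0,B_2,B')$; this is where the factorization $A_1\cong A_{11}\ox\cdots\ox A_{1k}$ together with the discarding of $A_{12},\dots,A_{1k}$ is essential. Heuristically, an encoder that uses entanglement between $A_1$ and an external purifying system to leak information through $B_1$ must spread that entanglement across all $k$ equivalent factors (by Haar symmetry of $U$ under permutations of the $A_{1\ell}$), and discarding $k-1$ of them suppresses any single contribution by a factor $1/k$. The constants $5$ and $4\log|E|$ in $\delta(k)$ track Fannes-type continuity corrections and the dimension of the $|E|$-dimensional Weyl register.

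\textbf{Main obstacle.} The hardest step is controlling the quantum piece $I(M;B_1\mid B_0B_2B'J)$. The classical outcome $J$ is straightforward to bound via Haar concentration, but a sophisticated encoder can in principle exploit entanglement between $A_1$ and the other registers (purified ultimately by $M$) to force $B_1$ to carry quantum information about $M$ beyond what $J$ alone reveals. Making the intuitive ``entanglement shield'' provided by the $k-1$ discarded subsystems rigorous, and turning it into a clean $1/k$-type bound, is the technical heart of the proof -- and is precisely the reason $A_1$ is realized as a tensor product of $k$ factors of dimension $|E|^2$ rather than as a single $|E|^{2k}$-dimensional register.
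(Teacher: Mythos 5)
Your lower bound is fine (and essentially equivalent to the paper's one\-/line argument of simply discarding the registers $B_0,B_1$). The upper bound, however, reduces the theorem to the inequality $I(M;B_1\mid B_0B_2B_3)\leq\delta(k)$ (your $B'$ is the paper's $B_3$), and that inequality is \emph{false}, so the reduction cannot be completed. The trouble is the step $I(M;B_2B_3\mid B_0)\leq\chi(\cN\ox\cE)$: it is true, but it charges only the Holevo information of the \emph{marginal} ensemble $\{p_i,\tr_{A_1}\rho_i\}$ against the budget $\chi(\cN\ox\cE)$, and that marginal ensemble can be arbitrarily far from optimal; the remaining information may then legitimately flow through $B_1$ without being $O(1/k)$. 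Concretely, let $\cN$ be the completely depolarizing channel on $\CC^d$ (so $|E|=d^2$), let $\cE=\id$ on $\CC^c$ with $c=|E|^{2k}$, and take the ensemble $\bigl\{\tfrac1d,\ \Phi^{A_1A_3}\ox\proj{\psi_m}^{A_2}\bigr\}$ with $\Phi$ maximally entangled and $\{\ket{\psi_m}\}$ a basis of $A_2$. Given $B_0=U$, measuring $B_3$ in the basis $\bigl\{(\bra{j_1\ldots j_k}U\ox\1)\ket{\Phi}\bigr\}$ reveals $j_1,\ldots,j_k$ exactly; applying $W_{j_1}^\dagger$ to $B_1$ then recovers $E$, and $B_1B_2=V\proj{\psi_m}V^\dagger$ determines $m$ perfectly. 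Hence $I(M;B_0B_1B_2B_3)=\log d$, while $I(M;B_0)=0$ and $I(M;B_2B_3\mid B_0)=0$ (the $B_2B_3$ marginal is $\tfrac{1}{d}\1\ox\tfrac{1}{c}\1$ for every $m$), forcing $I(M;B_1\mid B_0B_2B_3)=\log d\gg\delta(k)$. This does not contradict the theorem, because $\chi(\cN\ox\cE)=\log c\geq\log d$ --- but your decomposition has already spent the entire $\chi(\cN\ox\cE)$ budget on the (here useless) marginal ensemble.

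The paper avoids this by proving something weaker than a bound on the $B_1$ leakage: it shows $\chi_{\{\lambda_x,\varphi_x\}}(\cT^k_{\cN}\ox\cE)\leq\chi_{\cO}(\cN\ox\cE)+\delta(k)$ for a possibly \emph{refined} ensemble $\cO$ on $A_2A_3$, obtained by splitting states according to the measurement record $j_1^k$. The argument is a dichotomy on $I(J_1^k;A_2B_3\mid B_0)$. When it is small ($<4\log|E|$), the chaining Lemma~\ref{n-to-one} over the $k$ discarded registers (the rigorous version of your ``entanglement shield'' heuristic, combined with Lemma~\ref{lemma:approximate-j-decoupling}) shows $J_1$ is nearly uniform and decoupled, so $B_1$ adds almost exactly $\log|E|$ of conditional entropy to each output state and cancels against the same contribution in the average state. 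When it is large, no smallness is claimed at all: the extra information (exactly the $\log d$ in the counterexample above) is absorbed into $\chi_{\cO}(\cN\ox\cE)$ via the refinement $\varphi_x^{A_2A_3}=\sum_y\mu_{xy}\varphi_{xy}^{A_2A_3}$. Note that your $1/k$ suppression alone cannot rescue the high\-/correlation case, since $I(J_1^k;A_2B_3\mid B_0)$ can itself be of order $k\log|E|$, so dividing by $k$ still leaves a constant. To repair your proof you would need to replace the target $I(M;B_1\mid B_0B_2B_3)\leq\delta(k)$ by a statement that allows refining the comparison ensemble, which is essentially Propositions~\ref{less-four} and~\ref{geq-four}.
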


\medskip
On the other hand, we can get a lower bound on $P(\cT^k_{\cN}
\otimes \cA)$, where $\cA$ is the $50\%$ erasure channel of input
dimension $c$; note that by the no-cloning principle, $P(\cA) =
Q(\cA) = 0$. Since the private classical capacity is not smaller
than the quantum capacity, which is in turn lower bounded by the
coherent information, we evaluate the coherent
information of the channel $\cT^k_{\cN} \otimes \cA$. Let us look at
an input state as follows: Alice prepares a maximally entangled
state $\Phi^{A_1C}$ of dimension $c\times c$ and feeds the two
halves into the control input ($A_1$) and the $50\%$ erasure channel
($C$; its quantum output we also denote $C$, and the erasure flag
$D$). She feeds another arbitrary state $\rho^{A_2}$, whose
purification is denoted as $\ket{\varphi}^{AA_2}$, into the data
input and keeps the system $A$. I.e., we compute the coherent
information with respect to the input state
\(
  \sigma^{A_1A_2C} = \Phi^{A_1C} \otimes \rho^{A_2},
\)
so that the final state after the channel action is
\[
  \omega^{AB_0B_1B_2CD} = (\id_A\ox\cT^k_{\cN}\ox\cA_C)(\Phi^{A_1C} \ox \varphi^{AA_2}).
\]
The coherent information, with respect to this state, is
\[
  I_c(\sigma^{A_1A_2C}, \cT^k_{\cN} \ox \cA) = H(B_1B_2CD|B_0) - H(AB_1B_2CD|B_0).
\]

By an argument similar to that in~\cite{SS08}, we divide the
computation into two cases: the information sent into $\cA$ is
erased or not erased,
\[
  I_c(\sigma^{A_1A_2C}, \cT^k_{\cN} \ox \cA)
      = \frac{1}{2}(I_c^{\text{erased}}+I_c^{\text{not-erased}}).
\]
In the erased case, the state is decoupled between
$AB_2$ and $B_0B_1C$, so the coherent information simplifies to
\[\begin{split}
  I_c^{\text{erased}} &= H(B_2)-H(AB_2) \\
                      &= H\bigl(\cN(\rho)\bigr)
                        -H\bigl( (\id\otimes\cN) \proj{\varphi}^{AA_2} \bigr).
\end{split}\]
When the transmitted information is not erased, Bob
will be able to correct the errors encountered by the noisy channel
$\cN$ as follows. Bob reads the output $B_0$, learning which unitary
transformation is applied by the channel $\cT^k_{\cN}$. Then he can
measure $C$ in the proper basis to get $j$, and then apply
$W_j^\dagger$ to $B_1$, recovering the environment $E$.  As a
result, Bob possesses the output and the environment of $\cN$
simultaneously, effectively obtaining the quantum information input
into $\cN$ completely. In this case, the system $B_0C$ is decoupled
from $AB_1B_2$, which is in the pure state
$(\1\otimes V)\ket{\varphi}^{AA_2}$. So,
\[
  I_c^{\text{not-erased}} = H(B_1B_2)-H(AB_1B_2) = H(\rho^{A_2}).
\]
Adding these two cases together, we have
\[
  I_c(\sigma,\cT^k_{\cN} \otimes \cA)
     = \frac{1}{2} \Bigl[ H(\rho) + H\bigl(\cN(\rho)\bigr)
                     -H\bigl( (\id\ox\cN)\varphi \bigr) \Bigr].
\]
The term in brackets on the right hand side is called quantum mutual
information (between input and output of $\cN$). In~\cite{BSST02},
it is proved that the maximum over $\rho$ of the right hand side is
the entanglement-assisted quantum capacity $Q_E(\cN)$ of the channel
$\cN$. I.e.,
\begin{equation*}
  Q_E(\cN) = \max_\rho I_c(\Phi\otimes\rho,\cT^k_{\cN}\ox \cA),
\end{equation*}
and hence
\begin{equation}
  P(\cT^k_{\cN} \ox \cA) \geq Q(\cT^k_{\cN} \ox \cA) \geq Q_E(\cN).
  \label{transmitting}
\end{equation}

Now, comparing Eqs.~(\ref{eq:P-upper-bound}) and
(\ref{transmitting}), also making use of
Eq.~(\ref{eq:Capacities-Relation}), we see that for all channels
$\cN$ such that
\begin{equation}
  Q_E(\cN) > C(\cN),
  \label{nonadditivity-condition}
\end{equation}
we have, for sufficiently large $k$,
\begin{equation}
  P(\cT^k_{\cN} \ox \cA) \geq Q(\cT^k_{\cN} \ox \cA) > P(\cT^k_{\cN})\geq
  Q(\cT^k_{\cN}).
  \label{additivity-violation}
\end{equation}
Note that the channel $\cA$ has zero private classical capacity and
zero quantum capacity, so Eq.~(\ref{additivity-violation}) exhibits the
violations of the additivity of private classical capacity and the
quantum capacity at the same time.

All we need now is to find
quantum channels that satisfy Eq.~(\ref{nonadditivity-condition}).
One example is the depolarizing
channel of arbitrary dimension $d$,
for which both capacities are known~\cite{King03,BSST02},
$\cD_q(\rho) = (1-q)\rho +q\frac{1}{d}\1$.
For large $d$, the gap becomes asymptotically $\frac{1}{2}H(q,1-q)$~\cite{AM}.

There also exist large additivity violations:
In~\cite[Theorem V.1]{HLSW:rand} it is proven that in sufficiently large
dimension $d$, there exist $n=\bigl\lfloor (\log d)^4 \bigr\rfloor$ orthogonal bases
$\cB_\nu = (\ket{b^{(\nu)}_1},\ldots,\ket{b^{(\nu)}_d})$ such that
for all states $\rho$,
\[
  \frac{1}{n} \sum_{\nu=1}^n H(\cB_\nu|\rho) \geq \log d - 4,
\]
where $H(\cB_\nu|\rho) = -\sum_{i=1}^d \bra{b^{(\nu)}_i}\rho\ket{b^{(\nu)}_i}
                                         \log \bra{b^{(\nu)}_i}\rho\ket{b^{(\nu)}_i}$
is the Shannon entropy of the outcome distribution when
measuring the state $\rho$ in basis $\cB_\nu$. What this means is that
the channel $\cN$ from $d$ to $dn$ dimensions, defined as
\[
  \cN(\rho) = \sum_{\nu=1}^n \sum_{i=1}^d
               \frac{1}{n} \bra{b^{(\nu)}_i}\rho\ket{b^{(\nu)}_i}\ \proj{i}^B \ox \proj{\nu}^{B'},
\]
satisfies $\chi(\cN) \leq 4$. Since the channel is entanglement-breaking,
the additivity result of~\cite{Shor01}
applies, so $C(\cN) = \chi(\cN) \leq 4$. On the other hand, it is
straightforward to see that $Q_E(\cN) = \frac{1}{2}\log d$.
Thus we find that almost the entire bandwidth of $\cN$ can be activated by
the presence of entanglement.
Now, to construct the example for activation of the secret capacity by
a $50\%$ erasure channel, we observe that $|E| = dn = d(\log d)^4$.
We choose $k=\log |E|$, and get from Theorem~\ref{thm:A1-almost-irrelevant}
that $C(\cT_{\cN}^k) \leq O(1)$, while at the same time
$Q(\cT_{\cN}^k \ox \cA) \geq \frac{1}{2}\log d$.
Note however that the
total input dimension is $2^{O\left((\log d)^2\right)}$, which is also
the input dimension of the $50\%$ erasure channel.

\medskip\noindent
{\bf Conclusion.}
We showed a way of converting any gap between classical capacity
and entanglement-assisted quantum capacity of a channel into
a violation of the additivity of the private capacity of
the channel tensored with a $50\%$ erasure channel. In fact,
the \emph{quantum} capacity of the tensor product channel
is larger than the \emph{classical} capacity of the single channel.

The construction is based on a certain embedding of the given
channel into a version of the echo-correctable channels
from~\cite{BDSS04}. That the pairing with the erasure channel gives
larger quantum capacity follows from the echo-correctable reasoning
of the benefit of sharing entanglement. On the other hand, the upper
bound on the classical capacity relies on
showing that the additional ``gadgets'' built around the given
channel increase the capacity by an arbitrarily small amount.
The argument to do so is different from the one proving additivity
of $\chi$ of the channel (which we cannot do for $\cT_{\cN}^k$),
and also from the use of the recent continuity bound~\cite{SmithLeung:cont}
(which cannot be applied as $\cT_{\cN}^k$ is at finite distance
from any channel for which we know the capacity).

Thus, we even get a new type of example for the non-additivity of
the quantum capacity $Q$, which is different from Smith and
Yard's~\cite{SY08} as our channel is not PPT entanglement binding.
Furthermore, while in~\cite{SY08} the lower bound of half the
private capacity on the quantum capacity of the tensor product was
enough, here we experience even a large gap between these two
quantities. However, we also note a conceptual analogy in the
constructions: The PPT entanglement binding channel used
in~\cite{SY08} derives from a so-called \emph{pbit
state}~\cite{HHHO05}. It provides Alice and Bob with shared
randomness -- which is made private by distributing the purification
among Alice and Bob, but in a scrambled way that makes it impossible
for them to recover much of the entanglement. Our channel randomizes
the environment and hence gives it to Bob in an encrypted way,
limiting the receiver's knowledge about the noise
encountered by the channel. In the construction of~\cite{SY08} as in
the present one, the availability of additional resources allows
Alice and Bob to break the encryption and access the entanglement.



\medskip\noindent
{\bf Acknowledgments.} KL thanks the University of Bristol and CQT
at NUS, where part of this work was done, for their hospitality. KL,
XBZ and GCG were supported by the National Fundamental Research
Program (Grant No.~2009CB929601), the National Natural Science
Foundation of China (Grant Nos.~10674128 and 60121503), and the
Innovation Funds and ``Hundreds of Talents'' program of the Chinese
Academy of Sciences and Doctor Foundation of Education Ministry of
China (Grant No.~20060358043). AW was supported by the U.K.~EPSRC,
by the Royal Society, a Philip Leverhulme Prize, and by the European
Commission. The Centre for Quantum Technologies is funded by the
Singapore Ministry of Education and the National Research Foundation
as part of the Research Centres of Excellence programme.

\vfill\pagebreak

\appendix

\section*{The channel construction\protect\\ (Auxiliary material)}
Mathematically, the channel depicted in
Fig.~1 is written
\begin{equation}\begin{split}
  \label{eq:T-definition}
  \cT^k_{\cN}(\rho^{A_1 A_2})
       \!=\! \int_U \!\!\!{\rm d}U [U]^{B_0}
                         &\ox \sum_{j }( W_j^{E\rar B_1})(V^{A_2\rar B_2E})  \\
                         &\!\!\!\!\!\!\!\!\!\!\!\!\!\!\!\!\!\!\!\!\!\!\!\!\!\!\!\!\!\!\!\!\!\!\!\!\!\!\!\!\!\!\!\!\!
                          \bigl(\tr_{A_{1}} \bigl[U^{A_1}\rho {U^{A_1}}^\dagger\proj{j}^{A_{11}}  \bigr]\bigr)
                               ( V^{A_2\rar B_2E})^\dagger( W_j^{E\rar B_1})^\dagger,
\end{split}\end{equation}
where the notation $[U]$ denotes a classical label realized as mutually
orthogonal states. If there are only countably many values of $U$, these
may be thought of as orthogonal projectors $\proj{U}$ on an appropriate
Hilbert space.

\section*{Proof of Theorem 1\protect\\ (Auxiliary material)}
The left inequality in Eq.~(8) is trivial
(simply ignore the registers $A_1$, $B_0$ and $B_1$); also, once
the upper bound is proved, the inequality for the capacity follows
by induction. Hence we concentrate on the right inequality in (8).
We shall need a number of auxiliary results.

\begin{lemma}
  \label{lemma:approximate-j-decoupling}
  Consider an arbitray state $\rho$ on $\CC^r \ox \CC^s$, and a
  fixed basis $\{ \ket{j} \}_{j=1,\ldots,r}$ of $\CC^r$.
  Then for a Haar-distributed random unitary $U \in \mathcal{U}(\CC^r \ox \CC^s)$,
  the random variable $J$ is defined as follows:
  \[
    \Pr\{ J=j | U \} = \bra{j} \tr_s U\rho U^\dagger \ket{j},
  \]
  and it holds that
  \[
    \EE_U H(J|U) \geq \log r - \log\left( 1+\frac{1}{s} \right)
           \geq \log r - \frac{2}{s}.
  \]
\end{lemma}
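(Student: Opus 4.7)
\medskip\noindent
\textbf{Proof plan.} My plan is to lower-bound the Shannon entropy by
the R\'enyi-2 (collision) entropy and then apply Jensen's inequality
to pull the Haar expectation inside the logarithm, reducing the
problem to upper-bounding the expected collision probability via
the standard Haar second-moment formula on $\CC^d\ox\CC^d$.

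Writing $p_j(U) := \bra{j}\tr_s(U\rho U^\dg)\ket{j}$, I would first
note that $H(J|U) \geq -\log\sum_j p_j(U)^2$ holds for every $U$, so
by Jensen's inequality (convexity of $-\log$),
\[
  \EE_U H(J|U) \,\geq\, -\log \EE_U \sum_j p_j(U)^2.
\]
To evaluate the right-hand side, I would write
$p_j(U)^2 = \tr\bigl[(\proj{j}\ox\1_s)^{\ox 2}\,(U\ox U)(\rho\ox\rho)
(U^\dg\ox U^\dg)\bigr]$ and apply the second-moment formula
\[
  \EE_U (U\ox U)X(U^\dg\ox U^\dg)
    = \frac{d\tr X - \tr(FX)}{d(d^2-1)}\1
    + \frac{d\tr(FX) - \tr X}{d(d^2-1)}F
\]
on $\CC^d\ox\CC^d$ with $d=rs$, $X=\rho\ox\rho$, and $F$ the swap.
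Using $\tr[(\proj{j}\ox\1_s)^{\ox 2}]=s^2$, together with the
identity $\tr[(A\ox B)F]=\tr(AB)$ giving
$\tr[(\proj{j}\ox\1_s)^{\ox 2}F]=s$, and summing over
$j\in\{1,\ldots,r\}$, the algebra should collapse to
\[
  \EE_U \sum_j p_j(U)^2 \,=\, \frac{rs^2 - 1 + s(r-1)\tr\rho^2}{r^2s^2 - 1}.
\]
This is monotone increasing in $\tr\rho^2$, hence maximal on pure
$\rho$, where the numerator factors as $(s+1)(rs-1)$ and the bound
simplifies to $\frac{s+1}{rs+1}\leq \frac{1}{r}\bigl(1+\tfrac{1}{s}\bigr)$.

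Combining the two steps yields
$\EE_U H(J|U) \geq -\log\tfrac{1}{r}(1+\tfrac{1}{s}) = \log r - \log(1+1/s)$,
and $\log(1+x)\leq x\log e < 2x$ then gives the simpler form
$\geq \log r - 2/s$. The main pitfall I want to avoid is bounding
$\sum_j p_j(U)^2$ via the purity $\tr\sigma_U^2$ of the reduced
state $\sigma_U := \tr_s U\rho U^\dg$ (using
$\sum_j p_j^2 \leq \tr\sigma^2$): the purity estimate turns out to
be larger by a factor of order $r$ and would only give
$\log r - O(1)$, too weak to reach the claimed $\log r - O(1/s)$.
Everything else is a routine Weingarten computation.
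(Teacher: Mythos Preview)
Your proposal is correct and follows essentially the same strategy as the paper: bound the Shannon entropy below by the collision entropy, apply Jensen to pull the Haar expectation inside the logarithm, and evaluate the expected collision probability via the second-moment (swap-trick) formula. The only cosmetic difference is that the paper reduces to pure input states at the outset (by concavity of entropy and unitary invariance, so that $\EE_U (U\rho U^\dagger)^{\ox 2}$ becomes simply $\frac{\1+F}{rs(rs+1)}$), whereas you carry a general $\rho$ through the Weingarten computation and identify the pure-state worst case at the end via monotonicity in $\tr\rho^2$.
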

\begin{proof}
Without loss of generality, the input state is some fixed pure state,
so that after the unitary and before the measurement, we have a
uniformly distributed random state $\ket{\phi}$.
Using the bound $H(J) = -\sum_j P_J(j)\log P_J(j) \geq -\log \sum_j P_J(j)^2$
and the concavity of $\log x$, we get now by symmetry
\begin{equation}\begin{split}
  \label{eq:cond-J-entropy}
  \EE_U H(J|U) &\geq -\log \EE_\phi \sum_{j=1}^r \bra{j} \tr_s \phi \ket{j}^2 \\
               &=    -\log\left( r \EE_\phi \bra{1} \tr_s \phi \ket{1}^2 \right) \\
               &=    -\log\left( r \EE_\phi \bigl[ \tr\phi(\proj{1}^r\ox\1^s) \bigr]^2 \right).
\end{split}\end{equation}
For the latter expectation, we use a well-known trick:
\begin{equation*}\begin{split}
  \EE_\phi &\bigl[ \tr\phi(\proj{1}^r\ox\1^s) \bigr]^2  \\
           &= \EE_\phi \tr\bigl( (\phi\ox\phi)(\proj{1}^r\ox\1^s\ox\proj{1}^r\ox\1^s) \bigr) \\
           &= \tr\left( \frac{\1+F}{rs(rs+1)}(\proj{1}^r\ox\1^s\ox\proj{1}^r\ox\1^s) \right),
\end{split}\end{equation*}
where we have introduced a second tensor copy of the total Hilbert space
$\CC^r\ox\CC^s$, and $F$ is the SWAP operator of the two. The last line
evaluates easily to
\begin{equation*}\begin{split}
  \EE_\phi \bigl[ \tr\phi(\proj{1}^r\ox\1^s) \bigr]^2
           &=    \frac{1}{rs(rs+1)}(s^2+s)           \\
           &=    \frac{1}{r^2}\frac{s+1}{s+1/r}
            \leq \frac{1}{r^2}\frac{s+1}{s}.
\end{split}\end{equation*}
Inserting this into Eq.~(\ref{eq:cond-J-entropy}) concludes the proof.
\end{proof}

\begin{remark}
  \label{rem:finite-U}
  From the above calculation we see that the full unitary invariance of
  the Haar measure is not required; we only need to be able to perform the
  average purity of the reduced state, which is a quadratic function
  of the random pure state $\varphi$. Thus, it is sufficient to draw $U$
  from a so-called \emph{unitary 2-design}~\cite{Dankert06}; in
  prime power dimension it is known that the \emph{Clifford group}
  is an example of a finite 2-design, see e.g.~\cite{Gross07}.
\end{remark}

Now, in the channel we imagine that, just as $A_{11}$, also the
$k-1$ registers $A_{12},\ldots,A_{1k}$ are measured in a fixed
basis, resulting in $k$ random variables $J_1,\ldots,J_k$. For
simplicity, sometimes we write $J_\ell\ldots J_k$ as $J_\ell^k$, and
similarly for the measurement results, $j_\ell$. The variables
$J_2,\ldots,J_k$ are being traced over, so the channel does not
change (See Fig.~1).

\begin{lemma}
  \label{n-to-one}
  For any state $\sigma^{A_1A_2B_3}$, suppose we feed $A_1$
  into the channel $\cT^k_{\cN}$ but keep $A_2$ unchanged, let $\omega^{B_0J_1^k A_2B_3}$
  be the state after applying the random unitary $U$ and then doing
  measurements on $A_{1\ell}$, i.e.
  \[\begin{split}
    \omega^{B_0J_1^k A_2B_3} &= \int_U {\rm d}U [U]^{B_0} \!
                                       \ox \!\!\!\sum_{j_1\ldots j_k} \bigotimes_{\ell=1}^k [j_\ell]^{J_\ell}
                                        \!                                                         \\
                     &\phantom{==}
                      \ox\!
                      \bra{j_1\ldots j_k} U^{A_1} \sigma^{A_1A_2B_3} {U^{A_1}}^\dagger
                                                      \ket{j_1\ldots j_k}.
  \end{split}\]
  Then we have
  \begin{equation}\begin{split}
    I(J_1;A_2B_3|B_0) &\leq \frac{1}{k}I(J_1^k;A_2B_3|B_0) \\
                       &\phantom{==}
                            +\frac{1}{k} \sum_{\ell=1}^{k-1} I(J_\ell;J_{\ell+1}^k|B_0),
  \label{ineq0}
  \end{split}\end{equation}
  where
  \[\begin{split}
    I(X:Y) &= H(X)+H(Y)-H(XY) \\
           &= H(\rho_X)+H(\rho_Y)-H(\rho_{XY})
  \end{split}\]
  is the (quantum) mutual information
  between two subsystems of a bipartite
  state $\rho_{XY}$ with marginals $\rho_X$ and $\rho_Y$,
  and the informations conditional on $B_0$ are averages over
  the classical states of this register.
\end{lemma}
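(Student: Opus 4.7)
The plan is to combine two ingredients: (i) a permutation symmetry over the $k$ sub-registers of $A_1$ that comes for free from the Haar-random unitary $U$, and (ii) the chain rule for conditional (quantum) mutual information. Together they let us replace $I(J_1;A_2B_3|B_0)$ by the average of $I(J_\ell;A_2B_3|B_0)$ over $\ell$, which is in turn estimated by an essentially telescoping sum.

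The first step is to show that $I(J_\ell;A_2B_3|B_0) = I(J_1;A_2B_3|B_0)$ for every $\ell$. Let $P_\pi$ be the unitary on $A_1 = A_{11}\ox\cdots\ox A_{1k}$ that swaps the sub-registers $A_{11}$ and $A_{1\ell}$. For any fixed $U$, measuring $A_{11}$ on $(P_\pi U)\sigma (P_\pi U)^\dg$ produces the same cq state on $(J_1, A_2 B_3)$ as measuring $A_{1\ell}$ on $U \sigma U^\dg$. The Haar measure is invariant under the substitution $U \mapsto P_\pi U$, so the conditional mutual informations---which are just $U$-averages of mutual informations on cq states---coincide. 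Consequently
\[
k\, I(J_1; A_2B_3|B_0) = \sum_{\ell=1}^k I(J_\ell; A_2B_3|B_0).
\]

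For each summand I would then apply monotonicity (adjoining $J_{\ell+1}^k$ can only raise the mutual information, by strong subadditivity) followed by the chain rule:
\begin{align*}
I(J_\ell; A_2B_3 | B_0)
   &\leq I(J_\ell;\, A_2B_3\, J_{\ell+1}^k | B_0) \\
   &=    I(J_\ell; J_{\ell+1}^k | B_0) + I(J_\ell; A_2B_3 | B_0 J_{\ell+1}^k).
\end{align*}
Summing over $\ell$ and applying the chain rule in the opposite direction gives
\[
\sum_{\ell=1}^k I(J_\ell; A_2B_3 | B_0 J_{\ell+1}^k) = I(J_1^k; A_2B_3|B_0),
\]
so, after noting that the $\ell = k$ term of $I(J_\ell; J_{\ell+1}^k|B_0)$ vanishes and dividing by $k$, one recovers exactly Eq.~(\ref{ineq0}).

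The step I expect to need the most care is the permutation-invariance claim, because the register $B_0$ stores a classical copy of the whole $U$, so one has to be sure that the Haar-average and the conditioning interact correctly. The argument hinges on the fact that $U \mapsto P_\pi U$ is a measure-preserving bijection that merely relabels the classical values recorded in $B_0$, which leaves every conditional entropy invariant. The remaining manipulations---the chain rule and the inclusion inequality $I(X;Y|Z) \leq I(X;YW|Z)$---are standard entropic identities that carry over to the hybrid classical-quantum setting here because $J_1,\ldots,J_k$ and $B_0$ are classical registers.
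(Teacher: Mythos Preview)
Your proof is correct and matches the paper's argument: both combine the permutation symmetry of the $J_\ell$ under the Haar measure with the chain rule and the data-processing inequality $I(X;Y|Z)\leq I(X;YW|Z)$ to arrive at the same inequality. The paper iterates ``top-down'' (peeling $J_1$ off $I(J_1^k;A_2B_3|B_0)$ and repeating), whereas you sum the per-$\ell$ bounds and recombine via the reverse chain rule, but these are algebraically the same manipulation; your explicit justification of the symmetry step via $U\mapsto P_\pi U$ is a welcome addition to what the paper leaves as ``by symmetry.''
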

\begin{proof}
We use the chain rule to get
\[\begin{split}
  I(J_1^k;A_2B_3|B_0)
     &=    I(J_1;A_2B_3|B_0) + I(J_2^k;A_2B_3|B_0J_1) \\
     &=    I(J_1;A_2B_3|B_0) + I(J_2^k;A_2B_3J_1|B_0) \\
     &\phantom{===}                - I(J_1;J_2^k|B_0) \\
     &\geq I(J_2^k;A_2B_3|B_0) \\
     &\phantom{==}
           + I(J_1;A_2B_3|B_0) - I(J_1;J_2^k|B_0).
\end{split}\]
Iterating this step for $I(J_2^k;A_2B_3|B_0)$, then
$I(J_3^k;A_2B_3|B_0)$, etc., we obtain
\begin{equation}\begin{split}
  \label{ineq1}
  I(J_1^k;A_2B_3|B_0) &\geq \sum_{\ell=1}^k I(J_\ell;A_2B_3|B_0) \\
                       &\phantom{==}
                             - \sum_{\ell=1}^{k-1} I(J_\ell;J_{\ell+1}^k|B_0).
\end{split}\end{equation}
By symmetry all the $I(J_\ell;A_2B_3|B_0)$ are the same and equal to
$I(J_1;A_2B_3|B_0)$, concluding the proof.
\end{proof}

Now, to show Eq.~(8), we need to do the
following. Given another channel $\mathcal{E}$, whose input and
output registers we denote $A_3$ and $B_3$, respectively, we first
have to show that correlated inputs between $A_1$ and $A_2A_3$ are
(almost) of no use, and hence that the control part of our channel
$\cT^k_{\cN}$ can be skipped.

Mathematically, we have to look at a given ensemble of input states
$\{\lambda_x, \varphi_x^{A_1A_2A_3}\}$ for $\cT^k_{\cN} \ox \cE$. We
shall find a new ensemble of states only on $A_2A_3$ which has
almost the same Holevo information, even when we consider only the
output registers $B_2B_3$, i.e.~those of $\cN\ox\cE$. It turns out
that we have to distinguish two cases: An individual state
$\varphi^{A_1A_2A_3}$ can result in ``small'' correlation between
$J_1^k$ and $A_2B_3$ -- but then Lemma~\ref{n-to-one} above limits
the correlation between $J_1$ and $A_2B_3$, making input $A_1$ of
almost no use (Proposition~\ref{less-four} below). On the other
hand, if there is ``large'' correlation, we can use the $J_\ell$ to
break up the input state into an ensemble acting only on $A_2A_3$
with at least the same contribution to the Holevo quantity
(Proposition~\ref{geq-four}). For the following two Propositions, we
do the same thing as above, keeping a record of $J_1^k$, together
with the output state on $B_0B_2B_3$. However, notice that after a
unitary $U$ is applied to $A_1$ and then the measurements on the
$A_{1\ell}$ performed, $A_2A_3$ collapses into a state
$\varphi_{j_1^k}(U)^{A_2A_3}$ with probability $p_{j_1^k}(U)$:
\[
  p_{j_1^k}(U)\varphi_{j_1^k}(U)^{A_2A_3}
    = \bra{j_1\ldots j_k} U^{A_1} \varphi {U^{A_1}}^\dagger \ket{j_1\ldots j_k}.
\]
With respect to which \emph{input} state entropic quantities
are to be interpreted is indicated by adding that input state
as a subscript (unless it is $\varphi$).

\begin{proposition}
  \label{less-four}
  For the joint channel $\cT^k_{\cN} \otimes \cE$ with input
  state $\varphi^{A_1A_2A_3}$ and
  $\sigma^{A_1A_2B_3} = (\id_{A_1A_2}\ox\cE)\varphi^{A_1A_2A_3}$, if
  $I(J_1^k;A_2B_3|B_0) < 4\log|E|$, then
  \[
    H(B_1B_2B_3|B_0) > H(B_2B_3) + \log|E| - \delta(k),
  \]
  where $\delta(k) = \frac{1}{k}(5+4\log|E|)$.
\end{proposition}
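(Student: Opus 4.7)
\noindent The plan is to reduce the claim to an upper bound on the mutual information $I(B_1; B_0 B_2 B_3)$ in the output state of $\cT^k_{\cN} \ox \cE$. Because $B_0$, the classical record of the Haar-random $U$, is independent of $B_2 B_3$ (whose state depends only on the marginal $\varphi^{A_2 A_3}$), we have $H(B_2 B_3 | B_0) = H(B_2 B_3)$, and the statement becomes $H(B_1 | B_0 B_2 B_3) > \log|E| - \delta(k)$. Averaging $J_1$ first against $p_j(U)$ and then over Haar-random $U$ makes the marginal of $J_1$ uniform on $|E|^2$ outcomes, so the Weyl $1$-design identity $\frac{1}{|E|^2} \sum_j W_j \rho W_j^\dagger = \1/|E|$ yields $\omega^{B_1} = \1/|E|$. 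Thus $H(B_1) = \log|E|$, and the claim is equivalent to $I(B_1; B_0 B_2 B_3) < \delta(k)$.

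The crucial step will be a monotonicity-of-relative-entropy estimate. I view the operation producing $B_1$ as the CPTP map $\mathcal{M}: J_1 E \to B_1$ that applies $W_{J_1}$ to $E$ controlled on the classical index $J_1$ and then traces $J_1$; acting on the pre-rotation state $\omega^{J_1 E B_2 B_3 | U}$, $\mathcal{M}$ returns $\omega^{B_1 B_2 B_3 | U}$. Compare this to the ideal reference $\frac{\1^{J_1}}{|E|^2} \ox \tau^{E B_2 B_3}$, where $\tau^{E B_2 B_3}$ is the common marginal on $E B_2 B_3$ (independent of $U$ by Haar invariance). The $1$-design property gives $\mathcal{M}\bigl(\frac{\1^{J_1}}{|E|^2} \ox \tau^{E B_2 B_3}\bigr) = \frac{\1^{B_1}}{|E|} \ox \tau^{B_2 B_3}$. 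Applying data processing, averaging over $U$, and expanding both sides via the chain identity $D(\rho^{XY} \| \sigma^X \ox \tau^Y) = I(X;Y)_\rho + D(\rho^X\|\sigma^X) + D(\rho^Y\|\tau^Y)$ (the marginal-deviation terms all vanish because $\omega^{B_1} = \1/|E|$, $\omega^{J_1}$ is uniform, and $\omega^{B_2B_3|U} = \tau^{B_2B_3}$ is $U$-independent) gives
\[
  I(B_1; B_0 B_2 B_3) \leq I(J_1; E B_2 B_3 | B_0) + I(J_1; B_0).
\]
Since the Stinespring isometry $V: A_2 \to B_2 E$ preserves mutual informations, the first term equals $I(J_1; A_2 B_3 | B_0)$.

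The remaining task is to bound these two mutual informations using the tools already established. By Lemma~\ref{n-to-one},
\[
  I(J_1; A_2 B_3 | B_0) \leq \frac{1}{k}\, I(J_1^k; A_2 B_3 | B_0) + \frac{1}{k} \sum_{\ell=1}^{k-1} I(J_\ell; J_{\ell+1}^k | B_0);
\]
the first summand is strictly less than $4\log|E|/k$ by hypothesis. For the second summand I use the telescoping identity $\sum_{\ell=1}^{k-1} I(J_\ell; J_{\ell+1}^k | B_0) = \sum_{\ell=1}^{k} H(J_\ell | B_0) - H(J_1^k | B_0)$. Lemma~\ref{lemma:approximate-j-decoupling} applied to the marginal $\varphi^{A_1}$ with $r = |E|^{2k}$ and the trivial choice $s = 1$ (the whole $A_1$ register measured at once) gives $H(J_1^k | B_0) \geq 2k\log|E| - 1$, while each $H(J_\ell | B_0) \leq 2\log|E|$, so the sum is at most $1$. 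For $I(J_1; B_0) = 2\log|E| - H(J_1 | B_0)$, Lemma~\ref{lemma:approximate-j-decoupling} with $r = |E|^2$ and $s = |E|^{2(k-1)}$ bounds it by $2/|E|^{2(k-1)}$, which is at most $1/k$ in the relevant range (and at most $2$ in general). Combining, $I(B_1; B_0 B_2 B_3) < (4\log|E| + 2)/k < \delta(k)$, yielding the claim.

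The main obstacle will be identifying the correct data-processing setup: the reference state with $J_1$ uniform and independent quantum part $\tau^{E B_2 B_3}$ is precisely what makes the Weyl $1$-design collapse it onto the product $\1/|E| \ox \tau^{B_2 B_3}$, so that monotonicity splits $I(B_1; B_0 B_2 B_3)$ into exactly the two quantities $I(J_1; A_2 B_3 | B_0)$ and $I(J_1; B_0)$ that Lemmas~\ref{lemma:approximate-j-decoupling} and~\ref{n-to-one} were designed to control.
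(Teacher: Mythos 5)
Your proposal is correct and follows essentially the same route as the paper: the heart of both arguments is monotonicity of relative entropy under the controlled-Weyl map, comparing the true state to a reference in which $J_1$ is uniform and decoupled, with Lemma~\ref{n-to-one} plus Lemma~\ref{lemma:approximate-j-decoupling} supplying the bounds $I(J_1;A_2B_3|B_0)\lesssim \frac{1}{k}(4\log|E|+O(1))$ and $I(J_1;B_0)\leq 2/|E|^{2(k-1)}$ (your $I(B_1;B_0B_2B_3)\leq I(J_1;EB_2B_3|B_0)+I(J_1;B_0)$ is just the paper's relative-entropy chain rewritten in mutual-information language). The only genuine, and minor, divergence is your handling of $\sum_{\ell}I(J_\ell;J_{\ell+1}^k|B_0)$ via the telescoping identity and a single application of Lemma~\ref{lemma:approximate-j-decoupling} with $s=1$, giving the bound $1$, where the paper bounds each term by $2/|E|^{2(\ell-1)}$ and sums the geometric series; both fit within the same $\delta(k)$.
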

\begin{proof}
We start by invoking Lemma~\ref{lemma:approximate-j-decoupling}
to the entropy of $J_\ell\ldots J_k$: this yields
\[
  H(J_\ell\ldots J_k|B_0) \geq 2(k-\ell+1) \log |E| - \frac{2}{|E|^{2(\ell-1)}},
\]
which implies $I(J_\ell;J_{\ell+1}^k|B_0) \leq
\frac{2}{|E|^{2(\ell-1)}}$.

Now, with Lemma~\ref{n-to-one} and using the assumption that
$I(J_1^k;A_2B_3|B_0) < 4\log |E|$, we find
\begin{equation}\begin{split}
\label{mutual-info-bound}
  I(J_1;A_2B_3|B_0) &\leq \frac{1}{k}I(J_1^k;A_2B_3|B_0)  \\
                     &\phantom{\leq \frac{1}{k}4\log |E|.}
                        + \frac{1}{k}\sum_{\ell=1}^{k-1} I(J_\ell;J_{\ell+1}^k|B_0) \\
                     &\leq \frac{1}{k}4\log |E| + \frac{1}{k}\sum_{\ell=1}^\infty \frac{2}{|E|^{2(\ell-1)}} \\
                     &=    \frac{1}{k}4\log |E| + \frac{1}{k}\frac{2}{1-1/|E|^2} \\
                     &\leq \frac{1}{k}(3+4\log|E|).
\end{split}\end{equation}
For the given input state $\varphi^{A_1A_2A_3}$, let us write the
quantum state of the system $J_1A_2B_3$ as $\omega^{J_1A_2B_3}(U)$,
and the output state on $B_1B_2B_3$ as $\phi^{B_1B_2B_3}(U)$. We
also denote the CPTP quantum operation mapping $J_1A_2B_3$ to
$B_1B_2B_3$ in our setting of $\cT^k_{\cN} \ox \cE$ as $\cR$. Then
\[\begin{split}
  \cR\Bigl(\omega^{J_1A_2B_3}(U)\Bigr) &=\phi^{B_1B_2B_3}(U),   \\
  \cR\left(\frac{1}{|E|^2}\1^{J_1}\otimes\omega^{A_2B_3}\right)
                                       &=\frac{1}{|E|}\1^{B_1}\otimes\phi^{B_2B_3}.
\end{split}\]
By straightforward calculation, and using Eq.~(\ref{mutual-info-bound}) and
Lemma~\ref{lemma:approximate-j-decoupling} once more, we have
\[\begin{split}
  \EE_U &D\Bigl(\omega^{J_1A_2B_3}(U)\Big\|\frac{1}{|E|^2}\1^{J_1}\otimes\omega^{A_2B_3}\Bigr) \\
        &=    2\log|E|+I(J_1;A_2B_3|B_0)-H(J_1|B_0) \\
        &\leq 2\log|E|+\frac{1}{k}(3+4\log|E|)-\left( 2\log|E|-\frac{2}{|E|^{2(k-1)}} \right) \\
        &\leq \frac{1}{k}(5+4\log|E|),
\end{split}\]
where $D(\rho\|\sigma):=\tr\bigl(\rho(\log\rho-\log\sigma)\bigr)$ is the
quantum relative entropy. By the Lindblad-Uhlmann theorem~\cite{Uhlmann77}, the
quantum relative entropy is monotonic under completely positive
quantum operations. As a result, we obtain
\[\begin{split}
  \log|E|- &H(B_1B_2B_3|B_0)+H(B_2B_3)\\
           &=    \EE_UD\Bigl(\phi^{B_1B_2B_3}(U)\Big\|\frac{1}{|E|}\1^{B_1}\otimes\phi^{B_2B_3}\Bigr)\\
           &\leq \EE_UD\Bigl(\omega^{J_1A_2B_3}(U)\Big\|\frac{1}{|E|^2}\1^{J_1}\otimes\omega^{A_2B_3}\Bigr)\\
           &\leq \frac{1}{k}(5+4\log|E|),
\end{split}\]
which is exactly to say
\[\begin{split}
  H(B_1B_2B_3|B_0)\geq H(B_2B_3) + \log |E| - \frac{1}{k}(5+4\log|E|),
\end{split}\]
and we are done.
\end{proof}

\begin{proposition}
  \label{geq-four}
  For the joint channel $\cT^k_{\cN} \otimes \cE$ with input
  state $\varphi^{A_1A_2A_3}$ and
  $\sigma^{A_1A_2B_3} = (\id_{A_1A_2}\ox\cE)\varphi^{A_1A_2A_3}$, if
  $I(J_1^k;A_2B_3|B_0) \geq 4\log|E|$, then
  there exists a particular unitary $U_0$ such that
  \[
     H(B_1B_2B_3|B_0) \geq \sum_{j_1\ldots j_k} p_{j_1^k}(U_0) H_{\varphi_{j_1^k}(U_0)}(B_2B_3) + \log|E|.
  \]
\end{proposition}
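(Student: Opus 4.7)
My plan is to upgrade the existential statement to an inequality that holds on average over the choice of $U$, and then extract the specific $U_0$ by a one-line pigeonhole argument. Concretely, write
\[
  f(U) := \sum_{j_1\ldots j_k} p_{j_1^k}(U)\, H_{\varphi_{j_1^k}(U)}(B_2 B_3).
\]
Because $B_0$ is the classical label of $U$ and $J_1^k$ is the classical outcome of the thought measurement of $A_{11},\ldots,A_{1k}$, one has $\EE_U f(U) = H(B_2 B_3 \mid B_0 J_1^k)$, so it suffices to prove the averaged bound
\[
  H(B_1 B_2 B_3 \mid B_0) \geq \log|E| + H(B_2 B_3 \mid B_0 J_1^k),
\]
and then to choose any $U_0$ with $f(U_0) \leq \EE_U f(U)$ (for instance, a minimiser of $f$ over the $U$'s in the support of the measure).

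To establish the averaged bound I would rewrite its left-hand-side minus right-hand-side by adding and subtracting $H(B_2 B_3 \mid B_0)$. The chain rule on one side and the definition of conditional mutual information on the other give the identity
\[
  H(B_1 B_2 B_3 \mid B_0) - H(B_2 B_3 \mid B_0 J_1^k) = H(B_1 \mid B_0 B_2 B_3) + I(J_1^k; B_2 B_3 \mid B_0),
\]
so it remains to lower-bound the right-hand side by $\log|E|$.

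The first summand is bounded from below by $-\log|B_1| = -\log|E|$, the universal lower bound on quantum conditional entropy. For the second, I would exploit that the Stinespring isometry $V\colon A_2 \to B_2 E$ preserves mutual information, so $I(J_1^k; A_2 B_3 \mid B_0) = I(J_1^k; B_2 E B_3 \mid B_0)$. Expanding the right-hand side with the chain rule one obtains
\[
  4\log|E| \leq I(J_1^k; A_2 B_3 \mid B_0) = I(J_1^k; B_2 B_3 \mid B_0) + I(J_1^k; E \mid B_0 B_2 B_3),
\]
using the hypothesis. The general dimension bound $I(J_1^k; E \mid \cdot ) \leq 2\log|E|$ therefore forces $I(J_1^k; B_2 B_3 \mid B_0) \geq 2\log|E|$, and summing with the first lower bound delivers exactly $\log|E|$, closing the averaged inequality and, through the pigeonhole step, the proposition.

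The only genuinely delicate point is an accounting one: because $B_2 B_3$ is genuinely quantum, the best one can say about $I(J_1^k; E \mid B_0 B_2 B_3)$ is the dimension bound $2\log|E|$, and the conditional entropy $H(B_1 \mid B_0 B_2 B_3)$ can be as negative as $-\log|E|$. These two slacks together consume $3\log|E|$, which is precisely why the hypothesis of the proposition is calibrated at $4\log|E|$: it is the smallest threshold that still leaves the $\log|E|$ surplus needed on the right-hand side. Everything beyond this counting is routine chain-rule manipulation and an appeal to standard entropy inequalities.
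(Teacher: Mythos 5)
Your proof is correct and follows essentially the same route as the paper's: both arguments use isometric invariance of $A_2\mapsto B_2E$ together with the dimension bound $I(J_1^k;E|\cdot)\leq 2\log|E|$ to extract $I(J_1^k;B_2B_3|B_0)\geq 2\log|E|$ from the hypothesis, pay $\log|E|$ for the small register $B_1$, and finish with a pigeonhole over $U$. The only difference is bookkeeping --- you prove the averaged inequality first and select $U_0$ at the end, whereas the paper selects $U_0$ in the middle and uses subadditivity in the form $H(B_1B_2B_3|B_0)\geq H(B_2B_3|B_0)-H(B_1|B_0)$ rather than your equivalent $H(B_1|B_0B_2B_3)\geq-\log|E|$.
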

\begin{proof}
By assumption, we have
\[\begin{split}
  4\log|E| &\leq I_{\varphi^{A_1A_2A_3}}(J_1^k;A_2B_3|B_0) \\
           &=    I_{\varphi^{A_1A_2A_3}}(J_1^k;EB_2B_3|B_0) \\
           &\leq I_{\varphi^{A_1A_2A_3}}(J_1^k;B_2B_3|B_0) +2\log|E|,
\end{split}\]
since the discarding of the register $E$ cannot reduce the mutual
information by more than $2\log|E|$. Thus,
\begin{equation}\begin{split}
   2\log|E| &\leq I_{\varphi^{A_1A_2A_3}}(J_1^k;B_2B_3|B_0) \\
           &=    H_{\varphi^{A_2A_3}}(B_2B_3) - H_{\varphi^{A_1A_2A_3}}(B_2B_3|J_1^k B_0).
  \label{num1}
\end{split}\end{equation}
Hence, there must be a special unitary $U_0$, such that
\begin{equation}\begin{split}
  H_{\varphi^{A_1A_2A_3}}(B_2B_3|J_1^k B_0)
      &\geq H_{\varphi^{A_1A_2A_3}}(B_2B_3|J_1^k,B_0=U_0)   \\
      &\!\!\!\!\!
       =    \sum_{j_1^k}  p_{j_1^k}(U_0) H_{\varphi_{j_1^k}(U_0)^{A_2A_3}}(B_2B_3).
  \label{num2}
\end{split}\end{equation}
By the subadditivity of von Neumann entropy, we have
\begin{equation}\begin{split}
  \label{num3}
  H_{\varphi^{A_1A_2A_3}}(B_1B_2B_3|B_0)
          &\geq H_{\varphi^{A_2A_3}}(B_2B_3)-H(B_1|B_0)\\
          &\geq H_{\varphi^{A_2A_3}}(B_2B_3)-\log|E|.
\end{split}\end{equation}
At last, putting Eqs.~(\ref{num1}-\ref{num3})
together completes the proof.
\end{proof}

With that we are now ready for the

\medskip
\begin{proof}[of Theorem 1]
For an ensemble of input
states $\{\lambda_x, \varphi_x^{A_1A_2A_3}\}$ for $\cT^k_{\cN} \ox
\cE$ with average state $\rho = \sum_x \lambda_x\varphi_x$, we
divide it up into two classes according to the above cases:
\begin{align*}
  \mathcal{G} &:=\{ x | I_{\varphi_x}(J_1^k;A_2B_3|B_0) \geq 4\log|E| \}, \\
  \mathcal{L} &:=\{ x | I_{\varphi_x}(J_1^k;A_2B_3|B_0) <    4\log|E| \}.
\end{align*}
Then,
\begin{equation}\begin{split}
  \chi_{\{\lambda_x,\varphi_x\}}(\cT^k_{\cN} \ox \cE)
      &= H_{\rho}(B_1B_2B_3|B_0) \\
      &\phantom{=} - \sum_{x\in\cG} \lambda_x H_{\varphi_x}(B_1B_2B_3|B_0) \\
      &\phantom{=} -\sum_{x\in\cL} \lambda_x H_{\varphi_x}(B_1B_2B_3|B_0).
  \label{theorem-1}
\end{split}\end{equation}
By the subadditivity of the von Neumann entropy, the first term is
upper bounded
\begin{equation}\begin{split}
  H_\rho(B_1B_2B_3|B_0) &\leq H_\rho(B_1|B_0) + H_\rho(B_2B_3|B_0) \\
                        &\leq H_{\rho^{A_2A_3}}(B_2B_3) + \log|E|.
  \label{theorem-2}
\end{split}\end{equation}
Second, by Proposition~\ref{less-four}, we have
for $x\in\mathcal{L}$ that
\begin{equation}
  H_{\varphi_x}(B_1B_2B_3|B_0)
      > H_{\varphi_x^{A_2A_3}}(B_2B_3) + \log|E| - \delta(k).
  \label{theorem-3}
\end{equation}
Third, by Proposition~\ref{geq-four}, for $x\in\mathcal{G}$, there is
an ensemble decomposition of $\varphi^{A_2A_3}_x$,
\[
  \varphi^{A_2A_3}_x =\sum_{y=1}^{|E|^{2k}} \mu_{xy} \varphi^{A_2A_3}_{xy},
\]
such that
\begin{equation}
  H_{\varphi_x}(B_1B_2B_3|B_0)
       \geq \sum_{y=1}^{|E|^{2k}} \mu_{xy} H_{\varphi_{xy}^{A_2A_3}}(B_2B_3) + \log|E|.
  \label{theorem-4}
\end{equation}
Now define the union ensemble of the above states,
\[
  \cO = \{ \lambda_x,\varphi_x^{A_2A_3} \}_{x\in\cL}
           \cup \{ \lambda_x \mu_{xy}, \varphi_{xy}^{A_2A_3} \}_{x\in\cG,y=1,\ldots,|E|^{2k}}.
\]
Inserting Eqs.~(\ref{theorem-2}-\ref{theorem-4}) into Eq.~(\ref{theorem-1})
results in
\[
  \chi_{\{\lambda_x,\varphi_x\}} (\cT^k_{\cN} \ox \cE)
      \leq \chi_{\mathcal{O}}(\cN \otimes \cE) + \delta(k),
\]
and we are done.
\end{proof}

\section*{Depolarizing channel\protect\\ (Auxiliary material)}
The difference of $Q_E(\mathcal{D}_{q})$ and
$C(\mathcal{D}_{q})$ of the depolarizing channel
$\mathcal{D}_{q}$, for several special values of $d$.
The following plot shows $q$ (horizontal axis)
against $Q_E(\mathcal{D}_{q})-C(\mathcal{D}_{q})$
on the vertical axis.
\setcounter{figure}{1}
\begin{figure}[ht]
  \begin{center}
  \includegraphics[width=7cm]{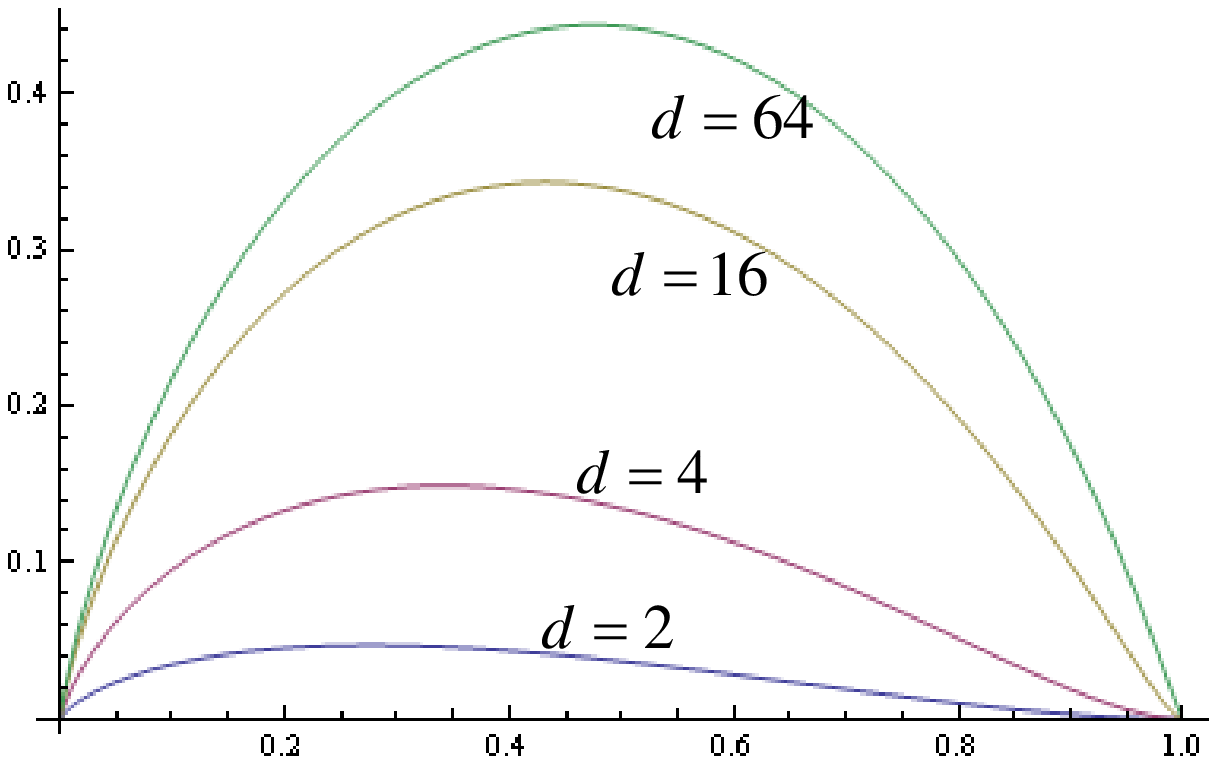}
  \end{center}
  \label{fig:graph}
\end{figure}

\end{document}